\documentclass[english]{article}
\usepackage[T1]{fontenc}
\usepackage[latin9]{inputenc}
\usepackage{xcolor}
\usepackage{amsmath}
\usepackage{amsthm}
\usepackage{amssymb}
\PassOptionsToPackage{normalem}{ulem}
\usepackage{ulem}

\makeatletter

\providecolor{lyxadded}{rgb}{0,0,1}
\providecolor{lyxdeleted}{rgb}{1,0,0}

\DeclareRobustCommand{\lyxsout}[1]{\ifx\\#1\else\sout{#1}\fi}

\newcommand{\lyxaddress}[1]{
	\par {\raggedright #1
	\vspace{1.4em}
	\noindent\par}
}
\theoremstyle{plain}
\newtheorem{thm}{\protect\theoremname}[section]
\theoremstyle{definition}
\newtheorem{defn}[thm]{\protect\definitionname}
\ifx\proof\undefined
\newenvironment{proof}[1][\protect\proofname]{\par
	\normalfont\topsep6\p@\@plus6\p@\relax
	\trivlist
	\itemindent\parindent
	\item[\hskip\labelsep\scshape #1]\ignorespaces
}{%
	\endtrivlist\@endpefalse
}
\providecommand{\proofname}{Proof}
\fi

\@ifundefined{date}{}{\date{}}
\usepackage{geometry}
\usepackage{cite}

\newcommand{\T}{\mathsf{T}}
\newcommand{\C}{\mathsf{C}}
\newcommand{\X}{\mathsf{X}}

\makeatother

\usepackage{babel}
\providecommand{\definitionname}{Definition}
\providecommand{\theoremname}{Theorem}

\begin{document}
\title{Contextuality with disturbance and without: Neither can violate substantive
requirements the other satisfies}
\author{Ehtibar N. Dzhafarov\textsuperscript{1} and Janne V. Kujala\textsuperscript{2}}
\maketitle

\lyxaddress{\begin{center}
\textsuperscript{1}Purdue University, USA, ehtibar@purdue.edu\\
 \textsuperscript{2}University of Turku, Finland, jvk@iki.fi
\par\end{center}}
\begin{abstract}
Contextuality was originally defined only for consistently connected
systems of random variables (those without disturbance/signaling).
Contextuality-by-Default theory (CbD) offers an extension of the notion
of contextuality to inconsistently connected systems (those with disturbance),
by defining it in terms of the systems' couplings subject to certain
constraints. Such extensions are sometimes met with skepticism. We
pose the question of whether it is possible to develop a set of substantive
requirements (i.e., those addressing a notion itself rather than its
presentation form) such that (1) for any consistently connected system
these requirements are satisfied, but (2) they are violated for some
inconsistently connected systems. We show that no such set of requirements
is possible, not only for CbD but for all possible CbD-like extensions
of contextuality. This follows from the fact that any extended contextuality
theory $\T$ is contextually equivalent to a theory $\T'$ in which
all systems are consistently connected. The contextual equivalence
means the following: there is a bijective correspondence between the
systems in $\T$ and $\T'$ such that the corresponding systems in
$\T$ and $\T'$ are, in a well-defined sense, mere reformulations
of each other, and they are contextual or noncontextual together.

\textsc{Keywords}: contextual equivalence, contextuality, consistent
connectedness, consistification, connections, disturbance, signaling.
\end{abstract}
A formal theory $\T$ of contextuality is defined by a class $\mathfrak{R}$
of possible systems of random variables and a rule by which these
systems are divided into noncontextual and contextual ones. In the
original theory of contextuality ( in which term we include both the
Kochen-Specker contextuality and the contextuality in distributed
systems, referred to as nonlocality \cite{AbramskyBrandenburger2011,Bell1964,Bell1966,Cabello2008,KurzynskiRamanathanKaszlikowski(2012),SuppesZanotti1981,KochenSpecker1967,Review2022}),
the class $\mathfrak{R}$ is confined to consistently connected systems,
or a subclass thereof. These are the systems with no ``disturbance''
or ``signaling,'' which means that the variables representing the
same property (answering the same question) in different contexts
are identically distributed. The Contextuality-by-Default theory (CbD)
extends the notion of contextuality to all systems of random variables,
including those with disturbance \cite{DzhCerKuj2017,KD2021}, and
it has been applied to several experimental and theoretical situations
\cite{Fluhmanetal.2018,Zhanetal2017,Bacciagaluppi(2015),Amaral etal.2018,Khrennikov2022,Kupzc2021,Malinowski2018,Wangetal2022}.
A recent workshop on contextuality \cite{workshop} exhibited a renewed
interest to studying contextuality in inconsistently connected systems,
including approaches that are distinctly non-CbD-like \cite{SAB.QCQMB22,ABS.QCQMB22,Mansfield.QCQMB22},
and some work directly critical of CbD (\cite{Jones.QCQMB22}, responded
to in Ref. \cite{DzhKuj2023Nonmeasurements}).

The present paper is not about CbD specifically. Rather it is about
a broad class of all possible \emph{CbD-like theories}, as defined
below. The plan and the main message of the paper are as follows.
In Section 1, we present the terminology and notation to be used,
and define the notion of a system of random variables modeling (representing,
describing) another system. In Section 2, we define the traditional
notion of contextuality in the language of probabilistic couplings
\cite{Thorisson(2000)}, and we introduce the notion of $\C$-contextuality
as a very broad generalization of both traditional contextuality and
CbD-contextuality. In Section 3, we introduce the notion of consistification
of a system, and show that any theory $\T$, irrespective of its class
$\mathfrak{R}$ of systems and a specific version of the $\C$-contextuality
it uses, can be redefined as a theory $\T'$, whose systems are consistently
connected, and that uses the traditional notion of contextuality.
Because of this, we conclude, there can be no set of substantive requirements
$\X$ for the notion of contextuality that are satisfied by all consistently
connected systems but contravened by some inconsistently connected
ones. Indeed, if such a set of requirements existed, one could form
a theory $\T$ whose class $\mathfrak{R}$ includes some systems contravening
$\X$. But $\X$ would then be satisfied by the theory $\T'$ that
is contextually equivalent to $\T$ and a mere reformulation thereof.
Consequently, requirements $\X$ cannot be substantive: they address
a form rather than the substance of the notion of contextuality. In
Section 4, we discuss some issues related to the consistified systems
(the term used for the consistently connected systems in $\T'$),
including the representability thereof by hidden variable models.
We also briefly discuss there a still more general (in fact, maximally
general) notion of $\C$-contextuality, one that does not have the
existence-and-uniqueness property postulated for $\C$-contextuality.
In the final analysis this does not alter the main conclusion of the
paper.

The idea that consistification precludes the possibility of rejecting
extended contextuality while accepting the traditional one was previously
mentioned in Ref. \cite{DzhKuj2023Nonmeasurements}. However, it was
confined to CbD only, and mentioned without elaborating. The consistification
procedure was first described in Ref. \cite{Amaral etal.2018} for
an older version of the CbD approach, and it was elaborated and adapted
to the current version of CbD in Ref. \cite{chapter}. Finally, the
$\C$-contextuality in our paper generalizes a more limited version
of $\C$-contextuality that was used in Ref. \cite{Dzh2017Nothing}
as a generalization of the CbD approach.

\section{Basic notions}

A \emph{system} of random variables is a set of double-indexed random
variables
\begin{equation}
\mathcal{R}=\left\{ R_{q}^{c}:c\in C,q\in Q,q\prec c\right\} ,\label{eq:system}
\end{equation}
where $q\in Q$ identifies what the random variable $R_{q}^{c}$ represents
(measures, responds to, describes), $c\in C$ identifies circumstances
under which $R_{q}^{c}$ is recorded (including what other random
variables are recorded together with $R_{q}^{c}$). $q$ and $c$
are referred to as, respectively, the \emph{content} and the \emph{context}
of the random variable $R_{q}^{c}$. The relation $q\prec c$ indicates
that a variable with content $q$ is recorded in context $c$. As
an example, this is a system with $Q=\left\{ 1,2,3\right\} $ and
$C=\left\{ 1,2,3,4\right\} $: 
\begin{equation}
\begin{array}{|c|c|c||c|}
\hline R_{1}^{1} & R_{2}^{1} &  & c=1\\
\hline  & R_{2}^{2} & R_{3}^{2} & c=2\\
\hline R_{1}^{3} &  & R_{3}^{3} & c=3\\
\hline R_{1}^{4} & R_{2}^{4} & R_{3}^{4} & c=4\\
\hline\hline q=1 & q=2 & q=3 & \mathcal{R}
\\\hline \end{array}\:.\label{eq:example system}
\end{equation}

The subset $R^{c}=\left\{ R_{q}^{c}:q\in Q,q\prec c\right\} $ of
random variables recorded in the same context $c$ (a row in the matrix
above) is termed a \emph{bunch}, and the subset $\mathcal{R}_{q}=\left\{ R_{q}^{c}:c\in C,q\prec c\right\} $
of random variables sharing a content $q$ (a column in the matrix
above) is termed a \emph{connection}. The difference in font ($R^{c}$
vs $\mathcal{R}_{q}$) reflects the fact that $R^{c}$ is a random
variable in its own right (i.e., all its components are jointly distributed),
whereas the components of $\mathcal{R}_{q}$ are not jointly distributed.
In fact, no two random variables $R_{q}^{c}$ and $R_{q'}^{c'}$ are
jointly distributed unless they are in the same bunch, $c=c'$. The
measurable space on which $R_{q}^{c}$ is distributed is assumed to
be the same for all elements of a connection, and can be denoted $\left(A_{q},\Sigma_{q}\right)$.

The triple $\left(Q,C,\prec\right)$ is called the \emph{format} of
the system. It is essentially the mathematical depiction of ``what
the system is about,'' what kind of empirical or theoretical situation
it represents. Thus, the format of the system in $\left(\ref{eq:example system}\right)$
can be presented as 
\begin{equation}
\begin{array}{|c|c|c||c|}
\hline \star & \star &  & c=1\\
\hline  & \star & \star & c=2\\
\hline \star &  & \star & c=3\\
\hline \star & \star & \star & c=4\\
\hline\hline q=1 & q=2 & q=3 & \mathcal{R}\textnormal{'s format}
\\\hline \end{array}\:,
\end{equation}
where $\star$ indicates the elements of the relation $\prec$. To
define a system of a given format, one has to specify the distributions
of its bunches

As should be clear from the abstract and introduction, in this paper
we use the notion of one system of random variables, $\mathcal{B}$,
being a ``mere reformulation'' of another, $\mathcal{A}$. Intuitively,
this means that whatever empirical or theoretical situation is modeled
(described, represented) by $\mathcal{A}$, it is also modeled by
$\mathcal{B}$. The relation between a system and a situation it depicts
is difficult to formalize directly, as one would have then to impose
some formal structure on the situation being represented before it
is represented (as it is done in the representational theory of measurement,
\cite{Pfanzagl1968,Narens1985}). However, it is sufficient for our
purposes to formalize a simpler relationship: between a system $\mathcal{A}$
and another system that models (describes, represents) the system
$\mathcal{A}$. Moreover, rather than presenting this relationship
in a most general possible way, it will suffice to describe one special,
universally applicable construction of the modeling systems $\mathcal{B}.$
We will refer to this construction as \emph{canonical modeling}.

Consider two classes of systems, $\mathfrak{R}$ and $\mathfrak{R}^{\dagger}$,
in a bijective correspondence to each other, about which we say that
any system in $\mathfrak{R}$ is canonically modeled by the corresponding
system in $\mathfrak{\mathfrak{R}^{\dagger}}$. The following definition
gives a precise meaning to this relation.
\begin{defn}
\label{def:modeling}We say that a system $\mathcal{R}\in\mathfrak{R}$
with format $\left(Q,C,\prec\right)$ is canonically modeled by a
system $\mathcal{R}^{\dagger}\in\mathfrak{\mathfrak{R}^{\dagger}}$
with format $\left(Q^{\dagger},C^{\dagger},\prec^{\dagger}\right)$
if
\begin{description}
\item [{(canonical$\:$contents)}] $Q^{\dagger}=\left\{ \left(q,c\right):q\prec c\right\} $,
\item [{(canonical$\:$contexts)}] $C^{\dagger}=\left\{ \left(\cdot,c\right):c\in C\right\} \sqcup\left\{ \left(q,\cdot\right):q\in Q\right\} $,
\item [{(canonical$\:$relation)}] $\left(q,c\right)\prec^{\dagger}\left(\cdot,c\right)\Longleftrightarrow\left(q,c\right)\in Q^{\dagger}$,
and $\left(q,c\right)\prec^{\dagger}\left(q,\cdot\right)\Longleftrightarrow\left(q,c\right)\in Q^{\dagger},$
\item [{(main$\:$bunches)}] $R^{\left(\cdot,c\right)}=\left\{ R_{\left(q,c\right)}^{\left(\cdot,c\right)}:\left(q,c\right)\prec^{\dagger}\left(\cdot,c\right)\right\} \overset{d}{=}\left\{ R_{q}^{c}:q\prec c\right\} =R^{c}$
\item [{(auxiliary$\:$bunches)}] $R^{\left(q,\cdot\right)}=\left\{ R_{\left(q,c\right)}^{\left(q,\cdot\right)}:\left(q,c\right)\prec^{\dagger}\left(q,\cdot\right)\right\} $
is uniquely determined by the distributions of the corresponding variables
in $\mathcal{R}_{\left(q,\cdot\right)}=\left\{ R_{\left(q,c\right)}^{\left(\cdot,c\right)}:\left(q,c\right)\prec^{\dagger}\left(q,\cdot\right)\right\} $.
\end{description}
\end{defn}
Here, the symbol $\overset{d}{=}$ stands for ``has the same distribution
as.'' The dot symbol in $\left(\cdot,c\right)$ and $\left(q,\cdot\right)$
should be taken as part of the names of these contexts. We choose
this notation to emphasize that every random variable $R_{q}^{c}$
of the system $\mathcal{R}$ is placed in $\mathcal{R}^{\dagger}$
within two contexts, $\left(\cdot,c\right)$ and $\left(q,\cdot\right)$,
whose names are derived from the indices of the variable. Note that
the variables in the set $\mathcal{R}_{\left(q,\cdot\right)}$ defined
here have the same distributions as the corresponding variables in
$\mathcal{R}_{q}=\left\{ R_{q}^{c}:q\prec c\right\} $. We use the
former set, however, to emphasize that the auxiliary bunches are uniquely
determined by the corresponding variables in the main bunches. Note
that the variables in $\mathcal{R}_{\left(q,\cdot\right)}$ are not
jointly distributed, so $R^{\left(q,\cdot\right)}$ depends on their
individual distributions only.

To give an example, consider the systems 
\begin{equation}
\begin{array}{|c|c||c|}
\hline R_{1}^{1} & R_{2}^{1} & c=1\\
\hline R_{1}^{2} & R_{2}^{2} & c=2\\
\hline\hline q=1 & q=2 & \mathcal{A}
\\\hline \end{array}\quad\begin{array}{|c|c|c|c||c|}
\hline R_{\left(1,1\right)}^{\left(\cdot,1\right)} & R_{\left(2,1\right)}^{\left(\cdot,1\right)} &  &  & c=\left(\cdot,1\right)\\
\hline  &  & R_{\left(1,2\right)}^{\left(\cdot,2\right)} & R_{\left(2,2\right)}^{\left(\cdot,2\right)} & c=\left(\cdot,2\right)\\
\hline R_{\left(1,1\right)}^{\left(1,\cdot\right)} &  & R_{\left(1,2\right)}^{\left(1,\cdot\right)} &  & c=\left(1,\cdot\right)\\
\hline  & R_{\left(2,1\right)}^{\left(2,\cdot\right)} &  & R_{\left(2,2\right)}^{\left(2,\cdot\right)} & c=\left(2,\cdot\right)\\
\hline\hline q=\left(1,1\right) & q=\left(2,1\right) & q=\left(1,2\right) & q=\left(2,2\right) & \mathcal{B}
\\\hline \end{array}\:.\label{eq:QOE}
\end{equation}
Observe that in system $\mathcal{B}$ the contents, contexts, and
the relation between them are constructed in accordance with Definition
\ref{def:modeling}. System $\mathcal{B}$ canonically models system
$\mathcal{A}$ if
\begin{equation}
\left\{ R_{\left(1,1\right)}^{\left(\cdot,1\right)},R_{\left(2,1\right)}^{\left(\cdot,1\right)}\right\} \overset{d}{=}\left\{ R_{1}^{1},R_{2}^{1}\right\} ,\qquad\left\{ R_{\left(1,2\right)}^{\left(\cdot,2\right)},R_{\left(2,2\right)}^{\left(\cdot,2\right)}\right\} \overset{d}{=}\left\{ R_{1}^{2},R_{2}^{2}\right\} ,
\end{equation}
and if there is a rule by which the distribution of 
\begin{equation}
R^{\left(q,\cdot\right)}=\left\{ R_{\left(q,1\right)}^{\left(q,\cdot\right)},R_{\left(q,2\right)}^{\left(q,\cdot\right)}\right\} ,\qquad q=1,2,
\end{equation}
is uniquely determined by the distributions of the corresponding variables
in
\begin{equation}
\mathcal{R}_{\left(q,\cdot\right)}=\left\{ R_{\left(q,1\right)}^{\left(\cdot,1\right)},R_{\left(q,2\right)}^{\left(\cdot,2\right)}\right\} ,\qquad q=1,2.
\end{equation}

Observe the following properties of canonical modeling.
\begin{enumerate}
\item The formats of $\mathcal{R}$ and $\mathcal{R}^{\dagger}$ are reconstructible
from each other, and so are the bunches of the two systems. Moreover,
$\mathcal{R}^{\dagger}$ faithfully replicates the bunches of $\mathcal{R}$.
This allows one to say that $\mathcal{R}$ and $\mathcal{R}^{\dagger}$
describe the same empirical or theoretical situation.
\item One might wonder why we need the auxiliary contexts at all, and they
are indeed unnecessary if all one wants is a system modeling another
system, e.g.,
\[
\begin{array}{|c|c|c|c||c|}
\hline R_{\left(1,1\right)}^{\left(\cdot,1\right)} & R_{\left(2,1\right)}^{\left(\cdot,1\right)} &  &  & c=\left(\cdot,1\right)\\
\hline  &  & R_{\left(1,2\right)}^{\left(\cdot,2\right)} & R_{\left(2,2\right)}^{\left(\cdot,2\right)} & c=\left(\cdot,2\right)\\
\hline\hline q=\left(1,1\right) & q=\left(2,1\right) & q=\left(1,2\right) & q=\left(2,2\right) & \mathcal{B}'
\\\hline \end{array}
\]
 However, we will see the utility of the auxiliary contexts when we
introduce consistifications and contextual equivalence, in Section
\ref{sec:Impossibility-theorem}.
\item The contents in the modeling system are ``contextualized.'' For
instance, system $\mathcal{A}$ in (\ref{eq:QOE}) may be describing
an experiment in which two questions, $q=1$ and $q=2$, are asked
in two orders, $c=1$ indicating ``1 then 2,'' and $c=2$ indicating
``2 then 1'' \cite{Moore2022,WangBusemeyer2013}. In this case,
in the modeling system the content $q=\left(1,2\right)$ should be
interpreted as ``question 1 asked second,'' and $q=\left(1,1\right)$
as ``question 1 asked first.'' We will return to the issue of interpretation
in Section \ref{subsec:Interpretation-of-contents}.
\item The indexation of the variables in a canonical model is clearly redundant,
and it can be simplified. It is more important, however, to maintain
the general logic of indexing the variables by their contents and
contexts.
\end{enumerate}

\section{Traditional and extended contextuality}

A system $\mathcal{R}$ is \emph{consistently connected} if in every
connection $\mathcal{R}_{q}$ all its constituent variables have one
and the same distribution. Otherwise the system is \emph{inconsistently
connected.} (The latter term is also used to designate arbitrary systems,
i.e., in the meaning of ``not necessarily consistently connected.'')

An \emph{overall coupling} of a system $\mathcal{R}$ in (\ref{eq:system})
is an identically labelled system 
\begin{equation}
S=\left\{ S_{q}^{c}:c\in C,q\in Q,q\prec c\right\} 
\end{equation}
of \emph{jointly distributed} random variables such that its bunches
$S^{c}$ are distributed as the corresponding bunches $R^{c}$, 
\begin{equation}
S^{c}\overset{d}{=}R^{c}.
\end{equation}
Clearly, $S$ has the same format as $\mathcal{R}$. A \emph{coupling}
$S_{q}$ of a connection $\mathcal{R}_{q}$ is a set
\begin{equation}
S_{q}=\left\{ S_{q}^{c}:c\in C,q\prec c\right\} 
\end{equation}
of jointly distributed random variables such that $S_{q}^{c}\overset{d}{=}R_{q}^{c}$
for all its elements. A connection coupling $S_{q}$ is said to be
an \emph{identity coupling }if $S_{q}^{c}=S_{q}^{c'}$ for any two
of its elements. Obviously, such a coupling exists if and only if
all of its elements (equivalently, all elements of the connection
$\mathcal{R}_{q}$) have one and the same distribution. Moreover,
the identity coupling is unique if it exists. (The uniqueness of a
coupling should always be understood as the uniqueness of its distribution.
In other words, it is irrelevant on what domain probability space
the coupling is defined as a random variable.)

The traditional notion of contextuality is confined to consistently
connected systems, and it can be rigorously defined in our terminology
as follows.
\begin{defn}
\label{def:traditional}A consistently connected system $\mathcal{R}\in\mathfrak{R}$
is noncontextual if it has a coupling $S$ in which any connection
$S_{q}$ is the identity coupling of the connection $\mathcal{R}_{q}$.
Otherwise the system is contextual.
\end{defn}
The class of all possible systems $\mathcal{R}$ in a theory $\T$
is denoted $\mathfrak{R}$. For instance, $\mathfrak{R}$ can only
contain the systems with finite sets $Q$ and $C$, or only the systems
with dichotomous random variables. By constraining the class $\mathfrak{R}$
one induces constraints on all possible random variables, $R_{q}^{c}\in\mathcal{\mathfrak{R}}_{+}^{+}$,
on bunches of random variables, $R^{c}\in\mathcal{\mathfrak{R}}^{+}$,
and on possible connections, $\mathcal{R}_{q}\in\mathcal{\mathfrak{R}}_{+}$.

In CbD, contextuality of a system $\mathcal{R}$ is defined by considering
its couplings $S$ and determining if in some of them the couplings
$S_{q}$ of the system's connections $\mathcal{R}_{q}$ satisfy a
certain statement. To generalize this definition to all possible CbD-like
theories, all one has to do is to replace this specific statement
with one that is (almost) arbitrary. Let $\C$ be any statement of
the form ``the coupling of connection $\mathcal{R}_{q}$ has the
following properties: ...''. The only constraints we impose on $\C$
are as follows.
\begin{defn}
\label{def:=00005CC}$\C$ is considered \emph{well-fitting} if (1)
for any connection $\mathcal{R}_{q}\in\mathfrak{R}_{+}$ there is
one and only one coupling $S_{q}$ of $\mathcal{R}_{q}$ that satisfies
$\C$; and (2) if $\mathcal{R}_{q}$ consists of identically distributed
random variables, then the coupling that satisfies $\C$ is the identity
coupling. We denote such a coupling of $\mathcal{R}_{q}$ as $\C\left[\mathcal{R}_{q}\right]$.
\end{defn}
To give an example of a well-fitting statement $\C$: in CbD, if the
class $\mathfrak{\mathcal{\mathfrak{R}}}$ of all possible systems
is confined to the systems with dichotomous variables, the well-fitting
statement is $\C=$ ``for any two random variables $S_{q}^{c_{1}}$
and $S_{q}^{c_{2}}$ in the coupling of connection $\mathcal{R}_{q}$,
the probability of $S_{q}^{c_{1}}=S_{q}^{c_{2}}$ is maximal possible.''
Another example: if the class $\mathfrak{\mathcal{\mathfrak{R}}}$
of all possible systems is confined to the systems with real-valued
(or more generally, linearly ordered) variables, then a well-fitting
statement can be $\C=$ ``for any two random variables $S_{q}^{c_{1}}$
and $S_{q}^{c_{2}}$ in the coupling of connection $\mathcal{R}_{q}$,
$S_{q}^{c_{1}}$ and $S_{q}^{c_{2}}$ have the same quantile rank.''
In section \ref{subsec:existence-uniqueness} we will discuss the
possibility of dropping the first of the two defining properties of
a well-fitting statement $\C$.
\begin{defn}
\label{def:C-contextuality}Given a well-fitting $\C$, a system $\mathcal{R}$
is $\C$-noncontextual if it has a coupling $S$ such that, for any
connection $\mathcal{R}_{q}$ of the system, the connection coupling
$S_{q}$ coincides with $\C\left[\mathcal{R}_{q}\right]$. Otherwise
the system is $\C$-contextual.
\end{defn}

\section{\label{sec:Impossibility-theorem}Equivalence and impossibility theorems}

It follows from the last two definitions that, for a well-fitting
$\C$, a consistently connected system is $\C$-noncontextual if and
only if it is noncontextual in the traditional sense (i.e., in the
sense of Definition \ref{def:traditional}). In other words, any extension
of the notion of contextuality using a well-fitting $\C$ properly
reduces to the traditional notion when confined to consistently connected
systems. This is not, obviously, sufficient to consider the extension
of contextuality by means of $\C$ well-constructed. There may be
other desiderata for a well-constructed notion of contextuality, and
a specific choice of $\C$ may not satisfy them. The question we pose
now is as follows:
\begin{description}
\item [{Q{*}:}] is it possible to formulate a set of such desiderata/requirements
$\X$ for the notion of contextuality that, for some choice of $\C$,
(1) $\X$ is satisfied for any consistently connected system, but
(2) $\X$ is not satisfied for some inconsistently connected systems?
\end{description}
Note that we impose no constraints on what $\X$ may entail, except
for its being related to contextuality. It may, e.g., for some relation
$B$ between systems, have the form ``if system $\mathcal{R}_{1}$
is (non)contextual, then any system $\mathcal{R}_{2}$ related to
$\mathcal{R}_{1}$ by $B$ is (non)contextual'' \cite{DzhKuj2023Nonmeasurements}.

To answer the question \textbf{Q{*}} we need the following result.
\begin{thm}
\label{thm:Main}For any well-fitting\textup{ $\C$} and system $\mathcal{R}$,
there is a consistently connected system $\mathcal{R}^{\ddagger}$
that canonically models it (Definition \ref{def:modeling}), such
that $\mathcal{R}$ is\textup{ $\C$}-contextual (Definition \ref{def:C-contextuality})
if and only if $\mathcal{R}^{\ddagger}$ is contextual in the traditional
sense (Definition \ref{def:traditional}).
\end{thm}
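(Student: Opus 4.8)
The plan is to construct $\mathcal{R}^{\ddagger}$ explicitly as the canonical model of $\mathcal{R}$ (Definition \ref{def:modeling}) in which the auxiliary bunches are chosen to be the $\C$-couplings of the connections. Concretely, I would keep the main bunches as faithful copies of the original bunches, $R^{\left(\cdot,c\right)}\overset{d}{=}R^{c}$, and set each auxiliary bunch to be the coupling prescribed by $\C$,
\[
R^{\left(q,\cdot\right)}:=\C\left[\mathcal{R}_{q}\right].
\]
The first thing to check is that this is a legitimate canonical model: the auxiliary-bunches clause of Definition \ref{def:modeling} requires $R^{\left(q,\cdot\right)}$ to be uniquely determined by the individual distributions of the variables in the connection, and this is exactly what the well-fitting property of $\C$ (Definition \ref{def:=00005CC}) guarantees, since it makes $\C\left[\mathcal{R}_{q}\right]$ exist and be unique given those distributions. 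The second thing to check is consistent connectedness. In a canonical model every content $\left(q,c\right)$ lies in exactly two contexts, $\left(\cdot,c\right)$ and $\left(q,\cdot\right)$, so each connection $\mathcal{R}_{\left(q,c\right)}^{\ddagger}$ has precisely two elements: the main one is distributed as $R_{q}^{c}$, while the auxiliary one is the corresponding marginal of $\C\left[\mathcal{R}_{q}\right]$, which, being a coupling of $\mathcal{R}_{q}$, is likewise distributed as $R_{q}^{c}$. Hence the two elements are identically distributed and $\mathcal{R}^{\ddagger}$ is consistently connected.

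The heart of the argument is a bijection between the couplings witnessing the two notions of noncontextuality; I would prove that $\mathcal{R}$ is $\C$-noncontextual if and only if $\mathcal{R}^{\ddagger}$ is noncontextual in the traditional sense, and the theorem then follows by contraposition. For the forward direction, suppose $\mathcal{R}$ has an overall coupling $S$ whose connection couplings $S_{q}$ all coincide with $\C\left[\mathcal{R}_{q}\right]$. I would define a coupling $S^{\ddagger}$ of $\mathcal{R}^{\ddagger}$ by placing the single variable $S_{q}^{c}$ into both contexts it occupies,
\[
S_{\left(q,c\right)}^{\left(\cdot,c\right)}=S_{\left(q,c\right)}^{\left(q,\cdot\right)}:=S_{q}^{c}.
\]
Then the main bunches of $S^{\ddagger}$ reproduce $R^{c}$, the auxiliary bunches reproduce $S_{q}=\C\left[\mathcal{R}_{q}\right]=R^{\left(q,\cdot\right)}$, and every two-element connection coupling is by construction an identity coupling, so $\mathcal{R}^{\ddagger}$ is noncontextual. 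Conversely, given a traditional noncontextual coupling $S^{\ddagger}$ of $\mathcal{R}^{\ddagger}$, the identity-coupling requirement forces $S_{\left(q,c\right)}^{\left(\cdot,c\right)}=S_{\left(q,c\right)}^{\left(q,\cdot\right)}$ almost surely; setting $S_{q}^{c}$ equal to this common value, reproduction of the main bunches makes $S$ an overall coupling of $\mathcal{R}$, and reproduction of the auxiliary bunches makes each connection coupling $S_{q}$ coincide with $\C\left[\mathcal{R}_{q}\right]$, so $\mathcal{R}$ is $\C$-noncontextual.

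The step I expect to require the most care is this coupling bijection, and specifically the observation that the identity-coupling constraint on the two-element connections of $\mathcal{R}^{\ddagger}$ collapses the two copies of each variable into one. Once the identification $S_{\left(q,c\right)}^{\left(\cdot,c\right)}=S_{\left(q,c\right)}^{\left(q,\cdot\right)}$ is in force, the two marginal requirements on $S^{\ddagger}$ --- agreement with the main bunches over the contexts $\left(\cdot,c\right)$ and agreement with the auxiliary bunches over the contexts $\left(q,\cdot\right)$ --- become precisely the two requirements defining a $\C$-noncontextual coupling of $\mathcal{R}$, namely that $S$ be an overall coupling whose connection couplings equal the prescribed $\C\left[\mathcal{R}_{q}\right]$. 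The correspondence $S\leftrightarrow S^{\ddagger}$ is then visibly bijective, completing the equivalence.
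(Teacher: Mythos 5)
Your proposal is correct and follows essentially the same route as the paper's own proof: the same consistification construction with auxiliary bunches set to $\C\left[\mathcal{R}_{q}\right]$, the same verification of consistent connectedness via the two-element connections, and the same correspondence $S\leftrightarrow S^{\ddagger}$ obtained by placing each $S_{q}^{c}$ into both of its contexts. You merely spell out the reverse implication that the paper dismisses as ``easily seen to be reversible,'' which is a welcome addition but not a different argument.
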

\begin{proof}
Let $\mathcal{R}^{\ddagger}$ be a canonically modeling system for
$\mathcal{R}$, with
\[
\mathcal{R}^{\left(q,\cdot\right)}=\left\{ R_{\left(q,c\right)}^{\left(q,\cdot\right)}:\left(q,c\right)\prec^{\ddagger}\left(q,\cdot\right)\right\} \overset{d}{=}\mathsf{\C}\left[\left\{ R_{q}^{c}:q\prec c\right\} \right]=\C\left[\mathcal{R}_{q}\right].\tag{*}
\]
One can check that $\mathcal{R}^{\ddagger}$ is consistently connected:
every connection $\mathcal{\mathcal{R}}_{\left(q,c\right)}$ of $\mathcal{R}^{\ddagger}$
consists of precisely two variables, $R_{\left(q,c\right)}^{\left(\cdot,c\right)}$
and $R_{\left(q,c\right)}^{\left(q,\cdot\right)}$, where $R_{\left(q,c\right)}^{\left(\cdot,c\right)}\overset{d}{=}R_{\left(q,c\right)}^{\left(q,\cdot\right)}$.
Indeed, $R_{\left(q,c\right)}^{\left(\cdot,c\right)}\overset{d}{=}R_{q}^{c}$,
because $R^{\left(\cdot,c\right)}\overset{d}{=}R^{c}$ in any canonically
modeling system; and $R_{\left(q,c\right)}^{\left(q,\cdot\right)}\overset{d}{=}R_{q}^{c}$
because we know from ({*}) that $R_{\left(q,c\right)}^{\left(q,\cdot\right)}\overset{d}{=}S_{q}^{c}$,
where $S_{q}^{c}\in\C\left[\mathcal{R}_{q}\right]$.

The system $\mathcal{R}^{\ddagger}$ thus constructed is referred
to as a \emph{consistification} of $\mathcal{R}$. We can now define
the consistification $S^{\ddagger}$ of a coupling $S$ of a system
in precisely the same way as for the system itself, except that ({*})
is replaced with the straightforward 
\[
S^{\left(q,\cdot\right)}=S_{q},
\]
with the obvious correspondence between the different indexations
within the two random vectors. Clearly, $S^{\ddagger}$ is a coupling
of $\mathcal{R}^{\ddagger}$.

Assume now that $\mathcal{R}$ is noncontextual. This means that it
has a coupling $S$ such that (a) $S^{c}\overset{d}{=}R^{c}$ for
every $c\in C$, and (b) $S_{q}=\C\left[\mathcal{R}_{q}\right]$ for
every $q\in Q$. Then in the coupling $S^{\ddagger}$ of system $\mathcal{R}^{\ddagger}$
we have (a') $S^{\left(\cdot,c\right)}\overset{d}{=}R^{\left(\cdot,c\right)}$
for every $\left(\cdot,c\right)\in C^{\ddagger}$, and (b') $S^{\left(q,\cdot\right)}=\C\left[\mathcal{R}_{q}\right]$
for every $\left(q,\cdot\right)\in C^{\ddagger}$. Moreover, since
both $S_{\left(q,c\right)}^{\left(\cdot,c\right)}$ and $S_{\left(q,c\right)}^{\left(q,\cdot\right)}$
equal $S_{q}^{c}$, we have (c') $S_{\left(q,c\right)}^{\left(\cdot,c\right)}=S_{\left(q,c\right)}^{\left(q,\cdot\right)}$.
But (a')-(b')-(c') mean that $\mathcal{R}^{\ddagger}$ is noncontextual
in the traditional sense. The implication here is easily seen to be
reversible, and we conclude that $\mathcal{R}$ is noncontextual if
and only if so is $\mathcal{R}^{\ddagger}$.
\end{proof}
In our example (\ref{eq:QOE}), $\mathcal{B}$ is a consistification
of $\mathcal{A}$ if we specify the rule for the auxiliary bunches
as follows: $R_{\left(q,c\right)}^{\left(q,\cdot\right)}\overset{d}{=}R_{\left(q,c\right)}^{\left(\cdot,c\right)}$,
and the distribution of $R^{\left(q,\cdot\right)}$ is the same as
that of $\C\left[\mathcal{R}_{q}\right]$. If $\C$ is chosen as in
CbD, the consistification of the system $\mathcal{R}$ in (\ref{eq:example system})
is the system below (omitting for simplicity the parentheses and commas
in $R_{\left(q,c\right)}^{\left(\cdot,c\right)}$ and $R_{\left(q,c\right)}^{\left(q,\cdot\right)}$):
\begin{equation}
{\normalcolor \begin{array}{|c|c|c|c|c|c|c|c|c||c|}
\hline R_{11}^{\cdot1} & R_{21}^{\cdot1} &  &  &  &  &  &  &  & c=\cdot1\\
\hline  &  & R_{22}^{\cdot2} & R_{32}^{\cdot2} &  &  &  &  &  & c=\cdot2\\
\hline  &  &  &  & R_{13}^{\cdot3} & R_{33}^{\cdot3} &  &  &  & c=\cdot3\\
\hline  &  &  &  &  &  & R_{14}^{\cdot4} & R_{24}^{\cdot4} & R_{34}^{\cdot4} & c=\cdot4\\
\hline R_{11}^{1\cdot} &  &  &  & R_{13}^{1\cdot} &  & R_{14}^{1\cdot} &  &  & c=1\cdot\\
\hline  & R_{21}^{2\cdot} & R_{22}^{2\cdot} &  &  &  &  & R_{24}^{2\cdot} &  & c=2\cdot\\
\hline  &  &  & R_{32}^{3\cdot} &  & R_{33}^{3\cdot} &  &  & R_{34}^{3\cdot} & c=3\cdot\\
\hline\hline q=11 & q=21 & q=22 & q=32 & q=13 & q=33 & q=14 & q=24 & q=34 & \mathcal{R}^{\ddagger}
\\\hline \end{array}\:,}
\end{equation}
where all variables are assumed to be dichotomous, and in each of
the auxiliary bunches, the variables are pairwise equal with maximal
possible probability.

For the purposes of contextuality analysis, $\mathcal{R}^{\ddagger}$
can be viewed as a mere reformulation of $\mathcal{R}$, a different
form of the same substance. We express this fact by saying that $\mathcal{R}$
and $\mathcal{R}^{\ddagger}$ are \emph{contextually equivalent}.
(In Refs. \cite{chapter,DzhKuj2023Nonmeasurements} contextual equivalence
is defined more narrowly, requiring also the numerical coincidence
of certain measures of contextuality, such as contextual fraction
\cite{Abramskyetal.2017}. In this paper, however, the level of abstraction
is higher, and we only consider the notion of contextuality rather
than its quantifications.)

Consider now a theory of (generally, extended) contextuality $\T=\T\left(\C,\mathfrak{R}\right)$.
In accordance with Theorem $\ref{thm:Main}$, we can form the class
$\mathfrak{R}^{\ddagger}$ of the consistifications of the elements
of $\mathfrak{R}$ in a bijective correspondence with $\mathfrak{R}$.
By extension of the term, we can say that $\T$ and $\T'=\T\left(\C_{0},\mathfrak{\mathfrak{R}^{\ddagger}}\right)$
are \emph{contextually equivalent}. $\C_{0}$ here denotes the statement
``the connection $R_{q}$ has an identity coupling'' that underlies
the traditional notion of contextuality, because by definition, it
can be viewed as a special case of any well-fitting statement $\C$.
We have now everything we need to demonstrate our main conclusion.
Let there be a set of requirements $\X$ of the notion of contextuality
that are satisfied by all consistently connected systems (using the
traditional contextuality) and contravened by some inconsistently
connected ones, using some version of $\C$-contextuality. Let $\T$
include some of the inconsistently connected systems contravening
$\X$. Clearly then, requirements $\X$ contradict theory $\T$, but
they are satisfied by the contextually equivalent theory $\T'=\T\left(\C_{0},\mathfrak{\mathfrak{R}^{\ddagger}}\right)$.
Therefore $\X$ is not a set of substantive requirements. We can summarize
this as a formal theorem.
\begin{thm}
For any well-fitting \textup{$\C$}, there can be no set of substantive
requirements $\X$ of the notion of contextuality that are satisfied
by all consistently connected systems (using the traditional contextuality)
and contravened by some inconsistently connected ones, using $\C$-contextuality.
\end{thm}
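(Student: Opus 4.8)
The plan is to argue by contradiction, taking Theorem~\ref{thm:Main} as the sole technical engine and using the notion of contextual equivalence it furnishes to turn its conclusion into a verdict about substantiveness. Suppose, contrary to the claim, that for some well-fitting $\C$ there is a set of requirements $\X$ that is satisfied by every consistently connected system under the traditional contextuality, yet is contravened by at least one inconsistently connected system under $\C$-contextuality. Fix such a witnessing system $\mathcal{R}$ and let it sit in a theory $\T=\T(\C,\mathfrak{R})$ whose class $\mathfrak{R}$ contains it.

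First I would feed $\mathcal{R}$ into Theorem~\ref{thm:Main} to obtain its consistification $\mathcal{R}^{\ddagger}$. That theorem guarantees three things at once: $\mathcal{R}^{\ddagger}$ is consistently connected, it canonically models $\mathcal{R}$ in the sense of Definition~\ref{def:modeling}, and it is contextual in the traditional sense precisely when $\mathcal{R}$ is $\C$-contextual. Applying the construction to every member of $\mathfrak{R}$ produces a class $\mathfrak{R}^{\ddagger}$ in bijective correspondence with $\mathfrak{R}$, and hence a companion theory $\T'=\T(\C_{0},\mathfrak{R}^{\ddagger})$ consisting entirely of consistently connected systems and employing only the traditional notion of contextuality (recovered as the special case $\C_{0}$ of any well-fitting $\C$).

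Next I would extract the contradiction. By the choice of witness, $\mathcal{R}$ contravenes $\X$. But $\mathcal{R}^{\ddagger}$ is consistently connected, so the hypothesized property of $\X$ forces it to satisfy $\X$. Thus $\X$ discriminates between $\mathcal{R}$ and $\mathcal{R}^{\ddagger}$, ruling one in and the other out.

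The step I expect to bear the argumentative weight --- and not any calculation --- is the claim that this discrimination is illegitimate because $\mathcal{R}$ and $\mathcal{R}^{\ddagger}$ are mere reformulations of one another, i.e., they are \emph{contextually equivalent}. Here I would lean on the properties of canonical modeling recorded after Definition~\ref{def:modeling}: the formats and bunches of the two systems are reconstructible from each other, $\mathcal{R}^{\ddagger}$ faithfully replicates the bunches of $\mathcal{R}$, and --- by Theorem~\ref{thm:Main} --- the two always carry the same contextuality verdict. A requirement that can still separate them must therefore respond to the presentation form alone and not to the situation being modeled or to its contextuality; by definition such a requirement is not \emph{substantive}. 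This contradicts the assumption that $\X$ is substantive, and so no such $\X$ can exist.
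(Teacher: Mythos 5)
Your proposal is correct and follows essentially the same route as the paper: both arguments take Theorem~\ref{thm:Main} as the engine, pass from the witnessing inconsistently connected system (or the theory $\T$ containing it) to its consistification in $\T'=\T(\C_{0},\mathfrak{R}^{\ddagger})$, observe that $\X$ must hold there while failing for the original, and conclude from contextual equivalence that $\X$ addresses only the presentation form and hence is not substantive. The only cosmetic difference is that you phrase the contradiction at the level of a single system and its consistification, whereas the paper phrases it at the level of the two theories.
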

Of course, a set of requirements $\X$ satisfied by $\T'$ but not
$\T$ can be readily formulated. The theorem says, however, that all
it can do is to lead one to prefer one of two equivalent representations
of contextuality, without affecting the substance of the notion.

Note also that in the theorem just formulated we assume no relationship
between the set of requirements $\X$ and the bijective correspondence
relating $\mathfrak{R}$ to $\mathfrak{\mathfrak{R}^{\ddagger}}$.
In particular, let $\X$ have the form ``if system $\mathcal{R}_{1}$
is contextual, then any system $\mathcal{R}_{2}$ related to $\mathcal{R}_{1}$
by relation $B$ is contextual.'' It is not necessary then, although
not excluded either, that $\mathcal{R}_{2}^{\ddagger}$ is also related
to $\mathcal{R}_{1}^{\ddagger}$ by relation $B$. All that is stated
in the theorem above is that if one wishes to use this $\X$ as a
substantive principle in testing competing theories, then the failure
of a theory to satisfy it cannot be selectively attributed to the
fact that its $\mathfrak{R}$ contains inconsistently connected systems.

\section{Miscellaneous remarks}

Here, we consider a few issues related to the main point of this paper.

\subsection{\label{subsec:Interpretation-of-contents}Interpretation of contents
and contexts}

Dealing with consistified systems $R^{\ddagger}$, one needs to get
used to a new interpretation of contents and contexts of the random
variables: as mentioned previously, in $R^{\ddagger}$, contents are
``contextualized,'' with $\left(q,c\right)$ in place of just $q$,
and the contexts are simply marginalized contents, $\left(\cdot,c\right)$
and $\left(q,\cdot\right)$. Consider as an example the EPR/Bohm experiment,
the most widely investigated paradigm in contextuality/nonlocality
research \cite{Bell1964,CHSH1969,Fine1982}. In the usual CbD notation,
the system representing it is
\begin{equation}
\begin{array}{|c|c|c|c||c|}
\hline R_{1}^{1} & R_{2}^{1} &  &  & c=1\\
\hline  & R_{2}^{2} & R_{3}^{2} &  & c=2\\
\hline  &  & R_{3}^{3} & R_{4}^{3} & c=3\\
\hline R_{1}^{4} &  &  & R_{4}^{4} & c=4\\
\hline\hline q=1 & q=2 & q=3 & q=4 & \mathcal{A}
\\\hline \end{array}\:,\label{eq:EPR1}
\end{equation}
where $q=1$ and $q=3$ denote two settings (axes) to be chosen between
by Alice, $q=2$ and $q=4$ are settings to choose between by Bob,
$c$ indicates the combination of their choices, and $R_{q}^{c}$
are dichotomous (spin-up/spin-down) variables. The consistified representation
of the same experiment is (again, omitting the parentheses and commas
in the indexation)
\begin{equation}
\begin{array}{|c|c|c|c|c|c|c|c||c|}
\hline R_{11}^{\cdot1} & R_{21}^{\cdot1} &  &  &  &  &  &  & c=\cdot1\\
\hline  &  & R_{22}^{\cdot2} & R_{32}^{\cdot2} &  &  &  &  & c=\cdot2\\
\hline  &  &  &  & R_{33}^{\cdot3} & R_{43}^{\cdot3} &  &  & c=\cdot3\\
\hline  &  &  &  &  &  & R_{44}^{\cdot4} & R_{14}^{\cdot4} & c=\cdot4\\
\hline R_{11}^{1\cdot} &  &  &  &  &  &  & R_{14}^{1\cdot} & c=1\cdot\\
\hline  & R_{21}^{2\cdot} & R_{22}^{2\cdot} &  &  &  &  &  & c=2\cdot\\
\hline  &  &  & R_{32}^{3\cdot} & R_{33}^{3\cdot} &  &  &  & c=3\cdot\\
\hline  &  &  &  &  & R_{43}^{4\cdot} & R_{44}^{4\cdot} &  & c=4\cdot\\
\hline\hline q=11 & q=21 & q=22 & q=32 & q=33 & q=43 & q=44 & q=14 & \mathcal{A}^{\ddagger}
\\\hline \end{array}\:.\label{eq:EPR2}
\end{equation}
The interpretation of, say, the content $q=\left(3,2\right)$ here
is as follows: it is the choice of axis 3 (that we know to be made
by Alice) when Bob's choice of his axis forms combination 2 with Alice's
choice (which we know to mean that Bob chooses axis 2). The interpretation
of context $c=\left(\cdot,2\right)$ is that it is simply the set
of contents whose second component is 2. Similarly, $c=\left(3,\cdot\right)$
is the set of contents whose first component is 3. The random variables
within context $c=\left(\cdot,2\right)$ are jointly distributed by
observation, whereas the random variables within context $c=\left(3,\cdot\right)$
are jointly distributed by computation (that, in turn, is uniquely
determined by the observations). If $\C$ is defined in accordance
with CbD, $\left(R_{21}^{2\cdot},R_{22}^{2\cdot}\right)$ is computed
so that $R_{21}^{2\cdot}\overset{d}{=}R_{21}^{\cdot1}$, $R_{22}^{2\cdot}\overset{d}{=}R_{22}^{\cdot2}$
(consistent connectedness), and the probability of $R_{21}^{2\cdot}=R_{22}^{2\cdot}$
is maximal possible. In particular, if $R_{21}^{\cdot1}\overset{d}{=}R_{22}^{\cdot2}$,
then $R_{21}^{2\cdot}=R_{22}^{2\cdot}$.

\subsection{Hidden variable models}

One possible argument against contextuality in inconsistently connected
systems is that it is not distinguishable from inconsistent connectedness
itself in the language of hidden variable models (HVMs). If, the argument
goes, a consistently connected system $\mathcal{R}$ in (\ref{eq:system})
is noncontextual, it has a coupling $S$ in which all random variables
can be presented as 
\begin{equation}
S_{q}^{c}=F\left(q,\Lambda\right),\label{eq:HVM0}
\end{equation}
where $\Lambda$ is a ``hidden'' random variable \cite{Dzh2019}.
If $\mathcal{R}$ is contextual, then all its couplings can only be
presented as 
\begin{equation}
S_{q}^{c}=F\left(q,c,\Lambda\right),\label{eq:HVM1}
\end{equation}
with ineliminable $c$. However, the latter representation is also
required for all inconsistently connected systems, irrespective of
whether they are $\C$-contextual or $\C$-noncontextual. We would
argue in response that this only means that on this general level
(merely showing the arguments of the functions) the language of HVMs
is too crude to capture the subtler properties of the couplings, such
as contextuality under inconsistent connectedness. However, even if
one takes this issue as a matter of concern, it is eliminated by the
consistification procedure. The system $\mathcal{R}^{\ddagger}$ corresponding
to $\mathcal{R}$ is noncontextual if and only if it has a coupling
$S^{\ddagger}$ such that, for all $\left(q,c\right)\in Q^{\ddagger},$
\begin{equation}
S_{\left(q,c\right)}^{\left(\cdot,c\right)}=G\left(\left(q,c\right),\Lambda\right)=S_{\left(q,c\right)}^{\left(q,\cdot\right)},\label{eq:HVM+}
\end{equation}
for some random variable $\Lambda$. If $\mathcal{R}^{\ddagger}$
is contextual, then in all its couplings, for some $\left(q,c\right)\in Q^{\ddagger},$$S_{\left(q,c\right)}^{\left(\cdot,c\right)}\not=S_{\left(q,c\right)}^{\left(q,\cdot\right)}$,
which means that their HVM representations can only be different functions,
\begin{equation}
S_{\left(q,c\right)}^{\left(\cdot,c\right)}=G_{1}\left(\left(q,c\right),\Lambda\right),S_{\left(q,c\right)}^{\left(q,\cdot\right)}=G_{2}\left(\left(q,c\right),\Lambda\right),\label{eq:HVM-}
\end{equation}
or, equivalently, the same function but with differently distributed
hidden variables,
\begin{equation}
S_{\left(q,c\right)}^{\left(\cdot,c\right)}=H\left(\left(q,c\right),\Lambda_{1}\right),S_{\left(q,c\right)}^{\left(q,\cdot\right)}=H\left(\left(q,c\right),\Lambda_{2}\right).
\end{equation}

It is instructive to apply this to the EPR/Bohm systems $\mathcal{A}$
and $\mathcal{A}^{\ddagger}$ in (\ref{eq:EPR1}) and (\ref{eq:EPR2}).
Here, contextuality is traditionally referred to as nonlocality, because
for the contextual system $\mathcal{A}$, all its couplings are represented
in the form of (\ref{eq:HVM1}): the ineliminable dependence on $c$
here is interpreted as the dependence of a measurement on a remote
setting. However, if one models the EPR/Bohm experiment by system
$\mathcal{A}^{\ddagger}$ instead, the HVM representations (\ref{eq:HVM+})
and (\ref{eq:HVM-}) both contain the contextualized content $\left(q,c\right)$
as an argument. Following the logic above, they should both be considered
nonlocal, even though one of them represents a noncontextual system
and is equivalent to (\ref{eq:HVM0}), while the other represents
a contextual system and is equivalent to (\ref{eq:HVM1}). It seems
to us, in agreement with other authors \cite{Khren2008}, that this
demonstration speaks against a naturalistic interpretation of the
HVMs in terms of physical dependences.

\subsection{\label{subsec:existence-uniqueness}The existence and uniqueness
constraint}

In the definition of $\C$-couplings, their reducibility to identity
couplings when applied to identically distributed variables is indispensable,
because without it the $\C$-contextuality will not be an extension
of traditional contextuality. How critical, however, is the second
constraint imposed on well-fitting $\C$, that the $\C\left[\mathcal{R}_{q}\right]$-coupling
always exists and is unique? What if one considers statements $\C$
for which $\C\left[\mathcal{R}_{q}\right]$ is a set that may be empty
or contain more than one coupling? This does complicate the matters
conceptually, because then, in the consistification procedure, the
$\left(q,\cdot\right)$-type bunches, those filled with the $\C\left[\mathcal{R}_{q}\right]$-couplings,
cannot be formed at all or cannot be formed uniquely. However, the
main point of this paper can still be made, with some qualifications.

We can agree that the consistification of an inconsistently connected
system $R$ is not a single system $R^{\ddagger}$ but a cluster of
systems $\left\{ R_{i}^{\ddagger}:i\in\mathbb{I}\right\} $, the elements
of which are obtaining by filling the $\left(q,\cdot\right)$-type
bunches in the consistification of $R$ by all possible couplings
of $R$'s connections. We can further agree that the cluster $\left\{ R_{i}^{\ddagger}:i\in\mathbb{I}\right\} $
is considered noncontextual if it contains a noncontextual system
$R_{i}^{\ddagger}$. In particular, if $\left\{ R_{i}^{\ddagger}:i\in\mathbb{I}\right\} $
is empty (which means that $\C\left[\mathcal{R}_{q}\right]$ does
not exist for at least one of the connections of $R$), the latter
definition is not satisfied, and the cluster should be considered
contextual. Once again, we have a theory dealing with consistently
connected systems only, except that the empirical or theoretical situations
they depict are represented by clusters of systems sharing a format
and the $\left(\cdot,c\right)$-bunches.

It might seem that dealing with an infinity of possible couplings
$\C\left[\mathcal{R}_{q}\right]$ or proving that $\C\left[\mathcal{R}_{q}\right]$
is empty is a significantly more difficult mathematical task than
when $\C$ is well-fitting. This is not the case, however, the complication
is not necessarily major. Mathematically, the problem of finding whether
a system $R$ is contextual consists in determining whether the variable
$S$ having the same format as $R$ can be assigned a probability
measure subject to certain constraints on its marginals. The constraints
are imposed by the distributions of the bunches $R^{c}$ (that $S^{c}$
have to match) and by the statement $\C$ that has to be satisfied
by the couplings $S_{q}$ of the connections $\mathcal{R}_{q}$. For
discrete random variables and finite sets $Q$ and $C$, this is a
linear programming task, provided the compliance with $\C$ can be
presented in terms of linear inequalities of the probabilities in
the distribution of $S_{q}$. For the consistification $R^{\ddagger}$
the problem is precisely the same, except that in place of connection
couplings one deals with $\left(q,\cdot\right)$-type bunches.

\paragraph*{Acknowledgements.}

This research was partially supported by the Foundational Questions
Institute grant FQXi-MGA-2201. We are grateful to Víctor H. Cervantes
and Damir D. Dzhafarov for critically commenting on the manuscript.

\end{document}